\documentclass[letterpaper, 10pt, conference]{Classes/ieeeconf}
\IEEEoverridecommandlockouts

\usepackage{Packages/custom}

\begin{document}
    \title{\LARGE \bf Performance-Guaranteed Reference Tracking With Power Directionality Constraints: Application to Controlled Stochastic Watersheds
}

\author{Jonathan Shell$^1$, Sepehr Moalemi$^2$, Branko Kerkez$^1$ and Jeff Scruggs$^1$%
\thanks{$^{1}$ J. Shell, J. Scruggs and B. Kerkez are with the Department of Civil and Environmental Engineering, and $^{2}$ S. Moalemi is with the Department of Electrical and Computer Engineering, both at the University of Michigan, Ann Arbor, MI 48109, USA.
{\tt\footnotesize \{jonshell, moalemi, bkerkez, jscruggs\}@umich.edu}.
The authors were supported by NSF grants 2206018 and 2426817.
}}

\maketitle
    \begin{abstract}
    Modern stormwater infrastructure faces increased demands that require a corresponding increase in capacity. Traditionally, these demands have been met by constructing new infrastructure assets, which is a costly endeavor. More recently, many system operators have achieved great success in employing feedback control techniques to improve system performance. However, the resulting closed-loop system exhibits power directionality constraints that introduce nonlinear constraints in feedback synthesis. In this work, we develop a stochastic control synthesis procedure with provable performance bounds on mean-square reference tracking for a general class of problems in which power directionality constraints arise. The proposed method is then applied to a flood mitigation example using a numerical model of a real-world smart water system. The key result is an extension of the performance-guaranteed control (PGC) framework, which was originally designed for disturbance rejection, to accommodate reference tracking control objectives.
\end{abstract}
    \section{Introduction}
Modern water system infrastructure assets face increased demands that in many cases exceed their original design specifications. These increased demands are the result of population growth and the corresponding increase in impermeable land cover, and many system operators now face the prospect of costly construction projects to replace the insufficient designs. This challenge is the result of the traditional design process, in which no feedback control is implemented, and the system is operated in an open-loop environment. Recently, water utilities have considered retrofitting stormwater drainage systems with sensing and actuation to enable feedback control. Feedback control of smart water systems can be used to reduce the risk of flooding or droughts, without the need for costly new infrastructure development \cite{cembrano_optimal_2004,wong_real-time_2018,pyke_assessment_2011,garcia_modeling_2015, evans_real-time_2011,maiolo_use_2020, sun_real-time_2020,vanrolleghem_modelling_2005}. This paradigm shift towards so-called ``smart'' water systems has been enabled by the decreasing costs of sensing and actuation. Feedback control of these systems is achieved via controllable elements to direct the flow of water as well as sensors to measure important properties, such as depth or flow. Often, the cost to implement feedback control to meet rising demand is orders of magnitude less than that required for infrastructure expansion projects.

For the purposes of control design, a smart watershed can be modeled as a network of connected reservoirs and conduits, as shown in Figure~\ref{fig:model}. Basins where water collects in the system are modeled as reservoirs with the controllable flows between reservoirs serving as control inputs to the system. Depending on the network in question, some or all of these elements may be open to the atmosphere. For example, in Figure~\ref{fig:model} the reservoirs are open to the atmosphere but connected with closed pipes, each of which has a controllable element. The rainfall and surface runoff into each basin can be thought of as exogenous stochastic disturbances, while uncontrolled inter-reservoir flows can be captured in the system dynamics.
\begin{figure}[t]
    \centering
    \vspace{8pt}
    \includegraphics[width=\columnwidth]{Figures/tikz/water_basins.tikz}
    \vspace{-15pt}
    \caption{Connected reservoir model of a watershed.}
    \vspace{-10pt}
    \label{fig:model}
\end{figure}

The typical control variable for a smart water system is the volumetric flow through its actuated devices, depicted in Figure~\ref{fig:model} as the flow variables $f_{ij}$. Actuation technologies in smart water systems can be classified as active or passive technologies. In the case of active technologies, the typical actuation scheme involves pumping water between reservoirs, while in the passive regime controllable orifices like gates and valves are typically used. In this work, we focus on the control of systems with passive control devices. Passive technologies require much less power to operate, and so are often better suited for the task of controlling smart water systems, which may have large actuation networks or be located in remote areas without reliable access to the electricity grid.

Despite its advantages, the use of passive actuation technologies introduces additional difficulties to the control design process. One difficulty that arises is a sector constraint on the control input. For a controllable orifice, the flow through the device is driven by the pressure differential across the device, and the resulting sector constraint constitutes a power directionality constraint (PDC) on the closed-loop system. PDCs naturally arise in many passive systems, and control of systems subject to PDCs has given rise to a rich body of work \cite{hrovat_survey_1997, karnopp_active_1995, housner_structural_1997,casciati_active_2012,cha_comparative_2013}. Actuation technologies that exhibit PDCs are attractive for their inherent inability to destabilize a passive plant, providing a guarantee of stability and some measure of robustness to the closed-loop system. However, control design in the presence of PDCs is challenging due to the inherent nonlinearity of the problem. 

This paper focuses on the development of an optimal control strategy for plants which exhibit PDCs. Specifically, the aim is to synthesize a feasible feedback law that seeks to minimize the stationary mean-square deviation of the system output $y_k$ from a desired set point $y^\star$, i.e.,
\begin{equation}\label{eq:obj-fun}
    J = \E{\lVert y_k-y^\star\rVert_2^2},
\end{equation}
where $\mathcal{E}\{\cdot\}$ denotes the stationary expectation. The approach developed here is a direct extension of the performance-guaranteed control (PGC) technique \cite{scruggs_nonlinear_2006,ligeikis_discrete-time_2022,ligeikis_lqg-inspired_2022}. PGC is a control law synthesis technique for stochastically excited systems that uses online optimal control techniques to find a control input with provable mean-square performance bounds and has been particularly effective for disturbance rejection problems involving PDCs for a variety of engineering applications \cite{scruggs_nonlinear_2006, ligeikis_lqg-inspired_2022, ligeikis_discrete-time_2022}. 

Although PGC has been widely investigated for disturbance rejection across many application domains, a framework for the reference tracking problem has yet to be developed. Additionally, all past work on PGC development has assumed a standard (i.e., Gaussian) model for the disturbance. In this work, we develop a PGC algorithm for a new class of problems in which the standard noise assumptions are relaxed and it is only assumed that the first and second moments of the disturbance are known. In addition, we develop PGC for the reference tracking problem, which introduces additional complexity in achieving performance guarantees analogous to those of disturbance rejection PGC.
\begin{figure}[t]
    \centering
    \vspace{4pt}
    \includegraphics[width=\columnwidth]{Figures/tikz/sector_bound.tikz}
    \vspace{-10pt}
    \caption{(a) General PDC and (b) sector-bounded constraint.}
    \vspace{-10pt}
    \label{pdc_general}
\end{figure}

Our approach is to develop the theory for stochastic reference-tracking PGC in a generic framework, and then to specialize this framework to the specific target application of urban watershed control. In Section~\ref{sec:modeling}, we first discuss the general assumptions that must hold for the subsequent theory to be valid, and then further specialize these modeling assumptions to the target application. In Section~\ref{control_section}, we present the proposed stochastic control synthesis methodology for the generic assumptions. The approach constitutes a two-stage process. In the first stage, we design a feasible linear time-invariant (LTI) static feedback controller, and optimize its performance. In the second stage, we design a nonlinear full-state controller that is guaranteed to improve upon the performance attained in the first stage. Section~\ref{sec:example} illustrates the application of this general methodology to an urban watershed control problem and validates the controller using a high-fidelity nonlinear hydrologic simulation. Finally, Section~\ref{sec:conclusions} provides brief concluding remarks.
    \section{System Modeling}\label{sec:modeling}

\subsection{General Assumptions}
We assume a discrete-time LTI system model with sampling period $\tau$ and state-space representation
\begin{equation}\label{eq:plant}
\mathcal{P}:\left\{
\begin{aligned}
    x_{k+1} &= A x_k + B_u u_k + B_w w_k,\\
    v_k     &= C_v x_k + D_u u_k + D_w w_k,\\
    y_k     &= C_y x_k,
\end{aligned}
\right.
\end{equation}
where $x_k\in \mathbb{R}^n$ is the system state, $y_k \in \mathbb{R}^{n_y}$ is the vector of outputs used for the reference tracking objective, and $w_k\in\mathbb{R}^{n_w}$ is an exogenous stochastic disturbance vector. We assume that $w_k$ constitutes an independent and identically distributed stochastic sequence, with mean $\mu_w\in\mathbb{R}^{n_w}$ and covariance matrix $S_w \in \mathbb{R}^{n_w\times n_w}$, with $S_w=S_w^T\succ 0$. The control inputs $u_k \in \mathbb{R}^{n_p}$ are colocated with the output vector $v_k \in \mathbb{R}^{n_p}$, such that the quantity $u_{i,k} v_{i,k}$ constitutes the power injected into $\mathcal{P}$ by the $i^{\mathrm{th}}$ control input. We assume that the mapping $u \mapsto v$ is passive, and consequently that the transfer function $H_{uv}(z) \triangleq D_{u} + C_v [zI-A]^{-1} B_u$ is positive-real.

We assume that the control inputs are constrained such that for each $k \in \mathbb{Z}$ and for each scalar control input $u_{i,k}$, in order for it to be feasible, the pair $(v_{i,k},u_{i,k}) \in \mathbb{F}_i$ where $\mathbb{F}_i$ is some connected subset of quadrants II and IV, as shown in Figure~\ref{pdc_general}(a). In particular, we develop our theoretical results under the assumption that the power directionality constraint is a sector, as shown in Figure~\ref{pdc_general}(b). Under this assumption, we can relate $u_k$ to a bilinear control input matrix $\tilde{Z}_k$, as 
\begin{equation}
    u_k = -\tilde{Z}_k v_k,
    \label{u=Zv}
\end{equation}
where $\tilde{Z}_k = \operatorname{diag}\{\tilde{z}_{1,k},\dots,\tilde{z}_{n_p,k}\}$ and, without loss of generality, we assume $u_k$ and $v_k$ are normalized such that $\tilde{z}_{i,k} \in [0,1]$ (in other words, each $\tilde{z}_{i,k}$ serves to modulate the corresponding $v_{i,k}$). Note that these bounds can be written equivalently as 
\begin{equation}\label{eq:z_tw_con}
    \tilde{Z}_k^T\tilde{Z}_k - \tfrac{1}{2} \tilde{Z}_k-\tfrac{1}{2} \tilde{Z}_k^T \preceq 0.
\end{equation}
Then, we have that
\begin{align*}
    u_k &= -\tilde{Z}_k(C_v x_k + D_{u} u_k + D_{w} w_k) \\
        &= -Z_k C_v x_k -Z_k D_{w} w_k,
\end{align*}
where $Z_k \triangleq [I+\tilde{Z}_{k}D_{u}]^{-1}\tilde{Z}_k$. Using this change of variable, it follows from \eqref{eq:z_tw_con} that
\begin{equation} \label{eq:z_ineq}
    Z_k^T[I + \tfrac{1}{2}(D_{u} + D_{u}^T)]Z_k - \tfrac{1}{2} Z_k^T-\tfrac{1}{2} Z_k \preceq0.
\end{equation}

\subsection{Application to Watersheds}

In the context of a watershed application, the control inputs are the volumetric flows through the controllable valves, and the colocated outputs are the corresponding pressure differentials across the valves. The fact that the valves cannot inject energy implies that the product of each flow and pressure differential must be nonpositive at each point in continuous time. Because our model is developed in discrete time rather than continuous-time, we must approximate the continuous-time power directionality constraint by a discrete-time version. Here, we do this by assuming that each component of the control input vector $u_k$ is a volumetric flow $f_{ij}(t)$ between reservoirs $i$ and $j$, where we assume $f_{ij}(t)$ is approximately constant for $t \in [k\tau , (k+1)\tau]$. Then, taking $h_k$ to be the vector of corresponding pressure differentials across the valves at time $k$, we impose the discrete-time constraint
\begin{equation*}
    u_{i,k}\tfrac{1}{2}[h_{i,k}+h_{i,k+1}] \leq 0, \qquad i=1,\dots,n_p,\quad k\in\mathbb{Z}.
\end{equation*}
We presume $h_k$  to be linearly related to the states, i.e., $h_k = C_h x_k$, and thus obtain the colocated vector $v_k$ as in \eqref{eq:plant} with 
\begin{align*}
    C_v &= \tfrac{1}{2}C_h(I + A), &
    D_{u} &= \tfrac{1}{2} C_hB_u, &
    D_{w} &= \tfrac{1}{2} C_hB_w.
\end{align*} 

Hydrologic networks exhibit relationships between pressure and flow which are more complex than \eqref{u=Zv}, because the true feasibility domain $\mathbb{F}_i$ is not a sector. From Bernoulli's principle, assuming fully submerged orifices and steady-state flow, the flow through each valve from reservoir $i$ to $j$ can be approximated in continuous time as 
\begin{equation}\label{eq:vol_flow}
    f_{ij}(t) \approx  -\operatorname{sgn}(h_{ij}(t)) K_{ij}\sqrt{|h_{ij}(t)|},
\end{equation}
where $h_{ij}(t)$ is the hydraulic head differential between basins $i$ and $j$, and $K_{ij} = C_{d,ij}A_{ij}\sqrt{2g}$ with $C_{d,ij}$ and $A_{ij}$ being the discharge coefficient and effective cross-sectional area of each orifice, respectively \cite{wong_real-time_2018}. Clearly, there is a nonlinear relationship between the hydraulic head differential $h_{ij}(t)$ and the flow $f_{ij}(t)$, but it can be conservatively approximated by a sector condition for $|h_{ij}|$ within a specified bound. This is illustrated in Figure~\ref{fig:pdc}, which shows the true power directionality constraint and a sector approximation which is conservative for pressure differentials below a certain threshold that depends on $K_{ij}$.
\begin{figure}
    \centering
    \vspace{4pt}
    \includegraphics[width=\columnwidth]{Figures/tikz/sector_approx.tikz}
    \caption{PDC for the watershed system and its approximate sector constraint.}
    \label{fig:pdc}
\end{figure}

Rainfall disturbances can be modeled in a variety of ways \cite{northrop_stochastic_2024, russo_rainfall_2006}. However, the critical assumption is that the disturbance is nonnegative. This precludes the use of standard (i.e., Gaussian) disturbance models that have been previously used for PGC in the literature, and is a major motivation for relaxing the noise assumptions for the current developments. A popular rainfall model used in the literature that satisfies the assumptions on $w_k$ is the Neyman--Scott Rectangular Pulse (NSRP) model \cite{cowpertwait_further_1991}, but any stochastic model satisfying the above assumptions will suffice for the proposed methodology.

Here, the output vector $y_k$ corresponds to the basin depths in the watershed system, which we aim to maintain at a vector of desired depths $y^\star$. The reference tracking problem for watershed systems is motivated by the fact that some reservoirs in the network may constitute storage basins. In this case, it is not desirable for these reservoirs to drain entirely, but overflowing must also be avoided.
    \section{Control Synthesis} \label{control_section}
The control design problem to be addressed is the stationary mean-square output reference tracking problem, i.e., the minimization of $J$ as in \eqref{eq:obj-fun}.
We propose a framework inspired by PGC for disturbance rejection problems.
In the PGC design paradigm, the controller is designed in two stages.
In the first stage, the transformed feedback gain is fixed at $Z_k=\bar{Z}$, corresponding to $u_k=-\bar{Z}(C_vx_k+D_ww_k)$, where $\bar{Z}$ is optimized for performance subject to the feasibility condition \eqref{eq:z_ineq}.
In the second stage, a full-state, nonlinear controller is designed that results in a closed-loop system with performance guarantees relative to the stage-1 linear feedback design.

\subsection{Optimal Static Feedback}
    Let $\bar{Z} \in \mathbb{R}^{n_p\times n_p}$ be some constant matrix satisfying \eqref{eq:z_ineq}, and define $\bar{x}_k$ as the value of $x_k$ resulting from $Z_\ell = \bar{Z}$ for all $\ell\in\mathbb{Z}$. 
    Then, we have that
    \begin{equation*}
        \bar{x}_{k+1} = [A-B_u\bar{Z}C_v]\bar{x}_k + [B_w-B_u\bar{Z}D_{w}]w_k.
    \end{equation*}
    In stationarity, this stochastic process has a mean 
    \begin{equation}
    \bar{\mu}_x = [I-A+B_u\bar{Z}C_v]^{-1} [B_w-B_u\bar{Z}D_{w}] \mu_w,
    \label{bar_mux}
    \end{equation}
    and its covariance matrix \(
    \bar{S}_x \triangleq \mathcal{E} \left\{ (\bar{x}_k-\bar{\mu}_x) (\bar{x}_k-\bar{\mu}_x)^T \right\}
    \) is the solution to the Lyapunov equation
    \begin{align}
    \bar{S}_x =& [A-B_u\bar{Z}C_v] \bar{S}_x [A-B_u\bar{Z}C_v]^T 
    \nonumber \\ & 
    + [B_w-B_u\bar{Z}D_{w}] S_w [B_w-B_u\bar{Z}D_{w}]^T.
    \label{bar_Sx}
    \end{align}
    The stationary performance \eqref{eq:obj-fun} attained with $Z_k = \bar{Z}$ is then readily obtained as 
    \begin{equation*}
    \bar{J} = \trace\left\{ C_y \bar{S}_x C_y^T \right\} + \| C_y \bar{\mu}_x - y^\star \|_2^2.
    \end{equation*}

    The optimal static controller $\bar{Z}^\circ$ can be found as the solution to the optimization problem
    \begin{equation} \label{eq:static_opt}
        \hspace{-9pt}\bar{Z}^\circ = \text{sol} \left\{\begin{array}{@{}rl}
            \text{minimize}&\trace \{C_y\bar{S}_x C_y^T\} +\lVert C_y\bar{\mu}_x-y^\star \rVert_2^2\hspace{-9pt}\\
            \text{over} & \bar{\mu}_x, \bar{S}_x, \bar{Z}\hspace{-9pt}\\
            \hspace{-2pt}\text{constraints} & \eqref{bar_mux},\,\eqref{bar_Sx},\,\eqref{eq:z_ineq}.\hspace{-9pt}
        \end{array} \right.
        \raisetag{2.5\baselineskip}
    \end{equation}
    The above optimization is nonconvex in $\bar{Z}$, but it only needs to be solved once offline, and any one of several commercially available nonconvex optimization solvers may be employed.
    We also note that once the equality constraints on the optimization domain are eliminated, the dimension of the domain is just $n_p$, the size of the control input vector. 

\subsection{Performance-Guaranteed Control}
    In the second stage of the optimization, we design a nonlinear feedback mapping $\mathcal{K} : x_k \mapsto Z_k$ that is guaranteed to perform no worse than the optimized LTI static feedback gain $\bar{Z}$ found in stage 1. 
    For this, we have the following theorem, which is the central contribution of this paper.

    \begin{thm} \label{main_theorem}
    Let $\bar{Z} \in \mathbb{R}^{n_p\times n_p}$ satisfy the feasibility condition \eqref{eq:z_ineq}, and let the stationary performance with $Z_k = \bar{Z}$, $\forall k$, be $\bar{J}$.
    Let $\mathcal{K} : x \mapsto Z$ be any causal feedback law for which $Z_k$ satisfies \eqref{eq:z_ineq} for all $x_k \in \mathbb{R}^n$. 
    Let $P=P^T\succeq0$ be the unique solution to the Lyapunov equation
    \begin{equation*}
    0 = [A-B_u\bar{Z}C_v]^T P [A-B_u\bar{Z}C_v] - P + C_y^T C_y.
    \end{equation*}
    Let $L$, $F$, $R$, $M$, $G$, and $W$ be defined as
    \begin{align*}
    L \triangleq& [A-B_u\bar{Z}C_v-I]^{-T} C_y^T \left[ y^\star - C_y\bar{\mu}_x \right], \\
    F \triangleq & B_u^T P [A-B_u\bar{Z}C_v], \\
    R \triangleq & B_u^T P B_u, \\
    M \triangleq & B_w-B_u\bar{Z}D_{w}, \\
    G \triangleq & B_u^T P M S_w D_{w}^T, \\
    W \triangleq & D_{w} S_w D_{w}^T,
    \end{align*}
    and define the signals
    \begin{align*}
    m_k \triangleq & x_k - \bar{\mu}_x,&
    r_k \triangleq & F m_k + B_u^T L,&
    q_k \triangleq & C_v x_k + D_{w} \mu_w.
    \end{align*}
    Then, in stationarity, the performance of controller $\mathcal{K}$ is 
    \begin{equation}\label{theorem_1_result}
    J
    = \bar{J} + \mathcal{E} \left\{ \phi(x_k,Z_k) \right\},
    \end{equation}
    where
    \begin{align*}
    \phi(x_k,Z_k) 
    = &
    q_k^T \Delta Z_k^T R \Delta Z_k q_k
    - 2 r_k^T \Delta Z_k q_k 
    \nonumber \\ & 
    + \trace\left( W \Delta Z_k^T R \Delta Z_k - 2 G^T \Delta Z_k \right),  
    \end{align*}
    and where $\Delta Z_k \triangleq Z_k - \bar{Z}$. 
    \end{thm}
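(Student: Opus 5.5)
The plan is a stationarity (Lyapunov-difference) argument built on a quadratic-plus-linear storage function whose one-step expected increment, under the nominal gain $\bar{Z}$, reproduces the tracking cost exactly. Write $\bar{A} \triangleq A-B_u\bar{Z}C_v$. We take $\bar{A}$ to be Schur stable, as is implicit in the statement's claim that $P$ is unique and in $\bar{J}$ being a finite stationary performance. Then $\bar{A}-I$ is invertible, so $L$ is well defined. The candidate function is
\begin{equation*}
V(m) \triangleq m^T P m + 2L^T m .
\end{equation*}
Throughout, we assume the closed loop under $\mathcal{K}$ admits a stationary distribution with finite second moments, so that $\mathcal{E}\{V(m_{k+1})\}=\mathcal{E}\{V(m_k)\}$.

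\textbf{Step 1: closed-loop recursion for $m_k$.} Substituting $Z_k=\bar{Z}+\Delta Z_k$ into $u_k=-Z_k(C_vx_k+D_ww_k)$, and using the identity $\bar{\mu}_x=\bar{A}\bar{\mu}_x+M\mu_w$ from \eqref{bar_mux}, gives
\begin{equation*}
m_{k+1}=\bar{A}m_k+M\tilde{w}_k-B_u\Delta Z_k\big(q_k+D_w\tilde{w}_k\big),\qquad \tilde{w}_k\triangleq w_k-\mu_w .
\end{equation*}
This uses $C_vx_k+D_ww_k=q_k+D_w\tilde{w}_k$. Since $\mathcal{K}$ is causal, $\Delta Z_k$ depends only on $x_k$, and $\tilde{w}_k$ is zero-mean, has covariance $S_w$, and is independent of $x_k$.

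\textbf{Step 2: conditional expectation of $V(m_{k+1})$ given $x_k$.} We expand $V(m_{k+1})$ and group its terms.
\begin{itemize}
\item The $\Delta Z$-free part is $m_k^T\bar{A}^TP\bar{A}m_k+2L^T\bar{A}m_k+\trace(PMS_wM^T)$.
\item The cross terms linear in $\Delta Z_k q_k$ are $-2m_k^T\bar{A}^TPB_u\Delta Z_kq_k-2L^TB_u\Delta Z_kq_k=-2r_k^T\Delta Z_kq_k$, by the definitions of $F$ and $r_k$.
\item The noise cross term is $-2\mathcal{E}\{\tilde{w}_k^TM^TPB_u\Delta Z_kD_w\tilde{w}_k\}=-2\trace(G^T\Delta Z_k)$.
\item The quadratic terms are $q_k^T\Delta Z_k^TR\Delta Z_kq_k+\trace(W\Delta Z_k^TR\Delta Z_k)$.
\end{itemize}
Terms that are linear in $\tilde{w}_k$ vanish. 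Hence
\begin{equation*}
\mathcal{E}\{V(m_{k+1})\mid x_k\}=V(m_k)-m_k^TC_y^TC_ym_k+2L^T(\bar{A}-I)m_k+\trace(PMS_wM^T)+\phi(x_k,Z_k).
\end{equation*}
Here the Lyapunov equation $P-\bar{A}^TP\bar{A}=C_y^TC_y$ was used. By the definition of $L$, $2L^T(\bar{A}-I)m_k=-2m_k^TC_y^T(C_y\bar{\mu}_x-y^\star)$.

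\textbf{Step 3: take stationary expectations.} Setting $\mathcal{E}\{V(m_{k+1})\}=\mathcal{E}\{V(m_k)\}$ yields
\begin{equation*}
\mathcal{E}\{m_k^TC_y^TC_ym_k+2m_k^TC_y^T(C_y\bar{\mu}_x-y^\star)\}=\trace(PMS_wM^T)+\mathcal{E}\{\phi\}.
\end{equation*}
Adding $\lVert C_y\bar{\mu}_x-y^\star\rVert_2^2$ to both sides turns the left side into $J=\mathcal{E}\{\lVert C_yx_k-y^\star\rVert_2^2\}$, because $C_yx_k-y^\star=C_ym_k+(C_y\bar{\mu}_x-y^\star)$. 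On the right, the standard duality between \eqref{bar_Sx} and the Lyapunov equation for $P$ gives $\trace(PMS_wM^T)=\trace(C_y\bar{S}_xC_y^T)$. The right side is therefore $\bar{J}+\mathcal{E}\{\phi\}$, which is \eqref{theorem_1_result}.

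\textbf{Main obstacle.} The algebra in Step 2 is mechanical. The delicate point is choosing the linear term of $V$ with the right sign so that it exactly cancels the cross term $2m^TC_y^T(C_y\bar{\mu}_x-y^\star)$, which is absent in disturbance-rejection PGC. This is precisely what the $(\bar{A}-I)^{-T}$ in $L$ accomplishes, and I would verify the sign first. The other genuine technical issue is justifying the stationarity identity $\mathcal{E}\{V(m_{k+1})\}=\mathcal{E}\{V(m_k)\}$ for a nonlinear $\mathcal{K}$. Since the theorem is stated \emph{in stationarity}, I would assume existence of a stationary regime with finite second moments, noting that this is consistent with the passivity and PDC structure, and not prove it here.
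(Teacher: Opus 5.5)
Your proof is correct and is essentially the paper's argument. The paper expands $\lVert y_k-y^\star\rVert_2^2$ and uses the Lyapunov equation for $P$ and the definition of $L$ to produce two telescoping differences, $\mathcal{E}\{m_k^TPm_k-m_{k+1}^TPm_{k+1}\}$ and $2\mathcal{E}\{m_k^TL-m_{k+1}^TL\}$, which you bundle into $V(m)=m^TPm+2L^Tm$ via conditional expectation, relying on the same implicit assumptions (Schur stability of $A-B_u\bar{Z}C_v$ and a stationary regime with finite second moments).
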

    \proof
    We have that 
    \begin{align*}
    \| y_k - y^\star \|_2^2
    = x_k^T C_y^T C_y x_k - 2 (y^\star)^T C_y x_k + \| y^\star \|_2^2.
    \end{align*}
    Note that  
    \begin{align*}
    m_{k+1} &= [A-B_u\bar{Z}C_v] m_k - B_u \Delta Z_k [C_v x_k + D_{w} w_k] 
    \nonumber \\ &\qquad + M (w_k - \mu_w).
    \end{align*}
    In terms of $m_k$, the above is equivalently expressed as
    \begin{align*}
    \| y_k - y^\star \|_2^2
    &= m_k^T C_y^T C_y m_k - 2 (y^\star)^T C_y m_k 
    \nonumber \\ &\qquad 
    + 2 m_k^T C_y^T C_y \bar{\mu}_x + \| C_y \bar{\mu}_x - y^\star \|_2^2.
    \end{align*}
    Now, use the solution for $P$ to obtain the equivalent relationship
    \begin{align*}
    \| y_k - y^\star \|_2^2
    =& - m_k^T [A-B_u\bar{Z}C_v]^T P [A-B_u\bar{Z}C_v] m_k 
    \nonumber \\ & 
    + m_k^T P m_k - 2 (y^\star)^T C_y m_k 
    \nonumber \\ & 
    + 2 m_k^T C_y^T C_y \bar{\mu}_x + \| C_y \bar{\mu}_x - y^\star \|_2^2.
    \end{align*}
    Using the evolution equation for $m_k$, we have that the above is further equivalent to 
    \begin{align*}
    &\| y_k - y^\star \|_2^2
    = - m_{k+1}^T P m_{k+1} + m_k^T P m_k 
    \nonumber \\ & \quad
    - 2 m_k^T[A-B_u\bar{Z}C_v]^T P B_u \Delta Z_k [C_v x_k + D_{w} w_k ]
    \nonumber \\ & \quad
    + 2 m_k^T [A-B_u\bar{Z}C_v]^T P M[w_k-\mu_w]
    \nonumber \\ & \quad
    - 2 [C_vx_k+D_{w}w_k]^T\Delta Z_k^T B_u^T P 
    M[w_k-\mu_w]
    \nonumber \\ & \quad
    + [C_v x_k + D_{w} w_k ]^T \Delta Z_k^T B_u^T P 
    B_u \Delta Z_k [C_v x_k + D_{w} w_k ] 
    \nonumber \\ & \quad
    + [w_k-\mu_w]^T M^T P 
    M [w_k-\mu_w]
    \nonumber \\ & \quad
    - 2 (y^\star)^T C_y m_k 
    + 2 m_k^T C_y^T C_y \bar{\mu}_x + \| C_y \bar{\mu}_x - y^\star \|_2^2.
    \end{align*}
    Since $w_k$ is independent of $x_k$ and $\Delta Z_k$, it follows that the unconditional expectation of the above is
    \begin{align*}
    &\mathcal{E} \left\{ \| y_k - y^\star \|_2^2 \right\}
    = \mathcal{E} \left\{ - m_{k+1}^T P m_{k+1} + m_k^T P m_k \right\}
    \nonumber \\ & \quad
    - 2 \mathcal{E} \left\{  m_k^T[A-B_u\bar{Z}C_v]^T P B_u \Delta Z_k [C_v x_k + D_{w} \mu_w ] \right\}
    \nonumber \\ & \quad
    - 2 \mathcal{E} \left\{ \trace \left( S_w D_{w}^T \Delta Z_k^T B_u^T P M \right) \right\} + \trace\left\{ S_w M^T P M \right\} 
    \nonumber \\ & \quad
    + \mathcal{E} \left\{ [C_v x_k + D_{w} \mu_w ]^T \Delta Z_k^T R \Delta Z_k [C_v x_k + D_{w} \mu_w ] \right\}
    \nonumber \\ & \quad
    + \mathcal{E} \left\{ \trace\left( S_w D_{w}^T \Delta Z_k^T B_u^T P B_u \Delta Z_k D_{w} \right) \right\} 
    \nonumber \\ & \quad
    + 2 \mathcal{E} \left\{ m_k^T C_y^T [C_y \bar{\mu}_x-y^\star] \right\} + \| C_y \bar{\mu}_x - y^\star \|_2^2.
    \end{align*}
    Since \(\bar{J} = \trace\left\{ S_w M^TPM\right\} + \| C_y\bar{\mu}_x-y^\star \|_2^2\), it follows that
    \begin{align}
    &\mathcal{E} \left\{ \| y_k - y^\star \|_2^2 \right\}
    = \mathcal{E} \left\{ - m_{k+1}^T P m_{k+1} + m_k^T P m_k \right\}
    \nonumber \\ & \quad
    - 2 \mathcal{E} \left\{  m_k^T F^T \Delta Z_k q_k \right\}
    + \mathcal{E} \left\{ q_k^T \Delta Z_k^T R \Delta Z_k q_k \right\}
    \nonumber \\ & \quad
    - 2 \mathcal{E} \left\{ \trace \left( G^T \Delta Z_k \right) \right\}
    + \mathcal{E} \left\{ \trace\left( W \Delta Z_k^T R \Delta Z_k \right) \right\} 
    \nonumber \\ & \quad
    + 2 \mathcal{E} \left\{ m_k^T C_y^T [C_y \bar{\mu}_x-y^\star] \right\} + \bar{J},
    \label{proof_eq1}
    \end{align}
    with the definitions for $m_k$, $q_k$, $F$, $R$, $G$, and $W$ as given in the theorem.
    Next, use the definition of $L$ to get 
    \begin{align*}
    m_k^T C_y^T [C_y \bar{\mu}_x-y^\star]
    &= m_k^T L - m_{k+1}^T L + [w_k-\mu_w]^T M^T L
    \nonumber \\ & \quad \quad 
    - [C_vx_k+D_{w}w_k]^T \Delta Z_k^T B_u^T L.
    \end{align*}
    Taking the unconditional expectation and substituting into \eqref{proof_eq1}, we have that
    \begin{align*}
    \mathcal{E}\hspace{-2pt}\left\{ \| y_k - y^\star \|_2^2 \right\}
    &= \bar{J} + \mathcal{E}\hspace{-2pt}\left\{- m_{k+1}^T P m_{k+1} + m_k^T P m_k \right\} 
    \nonumber \\ & \quad
    + 2\mathcal{E}\hspace{-2pt}\left\{ m_k^T L - m_{k+1}^T L \right\} + \mathcal{E}\hspace{-2pt}\left\{ \phi(x_k,Z_k) \right\}\hspace{-1pt}.
    \end{align*}
    In stationarity, the two expectation differences involving $m_k$ and $m_{k+1}$ vanish, resulting in \eqref{theorem_1_result}.\hspace*{\fill}\QED

    Theorem~\ref{main_theorem} suggests a control strategy for formulating a full-state feedback law $\mathcal{K} : x_k \mapsto Z_k$, in which $\phi(x_k,Z_k)$ is minimized at each time step over the feasibility domain, i.e., 
    \begin{equation} \label{eq:pgc}
        Z_k = \text{sol} \left\{\begin{array}{rl}
            \text{minimize}& \phi(x_k,Z_k) \\
            \text{over} & Z_k \\
            \text{constraint} & \eqref{eq:z_ineq}.
        \end{array} \right.
    \end{equation}
    This optimization is convex and can be solved efficiently in real time. At its optimizer, $\phi(x_k,Z_k) \leqslant 0$ for all $k \in \mathbb{Z}$, since $\phi(x_k,\bar{Z}) = 0$ for all $x_k \in \mathbb{R}^n$, and $\bar{Z}$ satisfies \eqref{eq:z_ineq}.  
    Since $\phi(x_k,Z_k) \leqslant 0$ pointwise in time, its expectation is nonpositive and therefore $J \leqslant \bar{J}$.
    As such, choosing $\bar{Z} = \bar{Z}^\circ$, the solution to the optimization problem \eqref{eq:static_opt}, we have that the controller \eqref{eq:pgc} is guaranteed to do no worse than the optimized LTI static controller.
    \begin{rem}
        The fact that the nonlinear controller \eqref{eq:pgc} provides a provable bound on mean-square stochastic performance, and that this bound is the best performance achievable with LTI static control, is the central advantage of the technique we propose here.
        However, it is worth noting that the actual \emph{margin} of improvement in $J$ beyond $\bar{J}$ is in general not solvable in closed form. 
        Indeed, no claim is made that the proposed controller is truly optimal over the domain of all nonlinear causal controllers.
        Rather, the two salient advantages of the proposed feedback synthesis approach are the simplicity of the feedback law and the meaningful bound it gives on stochastic performance.
    \end{rem}
    \begin{rem}
        In the context of the prior literature on PGC, the novelty of Theorem~\ref{main_theorem} is the fact that it provides stochastic performance guarantees on a mean-square \emph{constant-reference tracking} objective, whereas all prior work on PGC has focused on disturbance rejection.
        Of course, for unconstrained control design, this is a superficial distinction because constant-reference tracking and disturbance rejection can be made equivalent to one another by shifting the origin of the state space.
        It is only because of the presence of power directionality constraints that the distinction is substantive.  
    \end{rem}
    \begin{rem}
        Although not necessary for the target application, it is straightforward to generalize Theorem~\ref{main_theorem} to accommodate definitions of performance outputs $y_k$ that involve $u_k$ and $w_k$ directly, i.e., $y_k = C_y x_k + D_{uy} u_k + D_{wy} w_k$. 
        We have specialized the theorem to the case with $D_{uy} = D_{wy} = 0$ merely to reduce the derivations required in the proof. 
    \end{rem}
    \section{Application to Watershed Control} \label{sec:example}
\begin{figure}
    \centering
    \vspace{5pt}
    \includegraphics[width=\columnwidth]{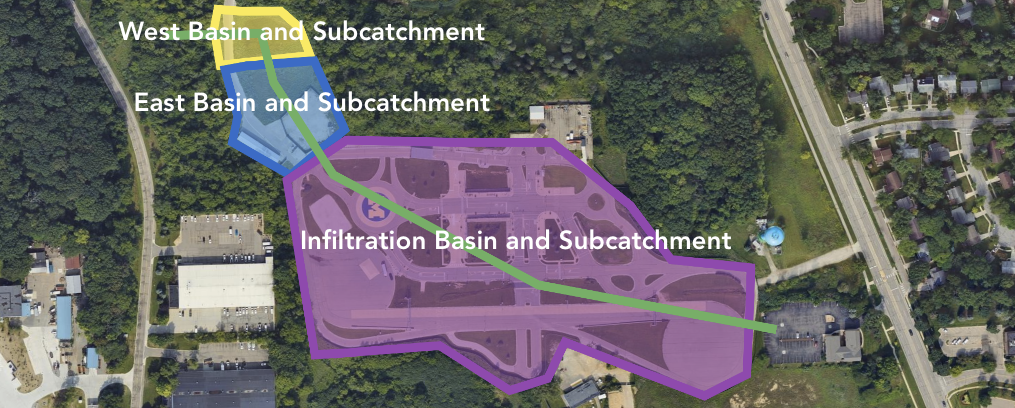}
    \vspace{-10pt}
    \caption{Satellite image of the Mcity Test Facility. The infiltration basin (pink) drains into the East Basin (blue), which drains into the West Basin (yellow), which drains into the outfall. The two controllable valves are located at the East-West and West-outfall junctions.}
    \label{fig:watershed}
    \vspace{-5pt}
\end{figure}
The watershed example considered in this section is located at the Mcity Test Facility at the University of Michigan and is shown in Figure~\ref{fig:watershed} \cite{um_mcity}. This site comprises three catchment areas and three basins with controllable valves between the East and West Basins, and between the West Basin and the outfall.

\subsection{System Identification}
    To obtain useful input-output data, a high-fidelity Stormwater Management Model (SWMM) of the watershed system was developed and used for simulations via \texttt{PySWMM}~\cite{McDonnell2020}. The NSRP model was used to generate rainfall data, which was treated as an exogenous disturbance input in the identification process. Moreover, we assumed that rainfall was spatially uniform across the watershed, and thus the same rainfall signal was applied to all three catchment areas in the model. Using \texttt{PySWMM}, time series of basin depths, junction heads, realized valve flows, and rainfall were recorded using the sampling period \(\tau=3~\mathrm{min}\).

    Subspace system identification was performed using the N4SID method~\cite{vanoverschee1994n4sid, vangemert_nfoursid_2025}. The resulting linear state-space model is of the form \eqref{eq:plant}, where $x_k \in \mathbb{R}^6$ denotes the latent state, $u_k \in \mathbb{R}^2$ is the vector of realized flows through the two controllable valves, $w_k \in \mathbb{R}$ is the scalar rainfall disturbance, $v_k \in \mathbb{R}^2$ is the colocated passive output associated with the valves, and $y_k \in \mathbb{R}^3$ is the vector of basin depths.
    After identification, a similarity transformation was applied so that the first three state coordinates correspond directly to the three basin depths.

    The initially identified input-output map from valve flow to passive output, described by $(A,B_u,C_v,D_{u})$, did not satisfy the discrete-time Kalman--Yakubovich--Popov (KYP) condition exactly, and therefore was not passive. Consequently, we computed a nearby passive approximation by solving an optimization problem, using \texttt{CVXPY}~\cite{diamond2016cvxpy}, over the passive-output matrices $\bar{C}_v$ and $\bar{D}_{u}$ while holding the identified state dynamics fixed. The differences between the identified and approximated system matrices are given by
    \begin{align*}
        \lVert C_v-\bar{C}_v\rVert_{\mathrm{F}} &= 0.171071, &
        \lVert D_{u}-\bar{D}_{u}\rVert_{\mathrm{F}} &= 0.004230.
    \end{align*}

\subsection{Controller Design}
Using the passive approximation of the identified watershed model, we next synthesized feedback controllers for stationary reference tracking of the East and West Basin depths. Note that since there is no valve actuation at the infiltration basin outlet, the depth of this basin is not directly controllable. However, by ensuring that the East Basin does not overflow, we can indirectly ensure that the infiltration basin does not overflow either, since the infiltration basin is much larger than the East Basin it feeds into. In both the static~\eqref{eq:static_opt} and PGC~\eqref{eq:pgc} controller synthesis problems, we augmented the nominal feasibility constraints with a structural diagonality constraint on \(\tilde Z = Z(I-\bar{D}_u Z)^{-1}\), requiring \([\tilde Z]_{ij}=0\) for all \(i\neq j\).

\subsection{Results}
\begin{figure}[!t]
    \centering
    \vspace{4pt}
    \includegraphics[width=1\columnwidth]{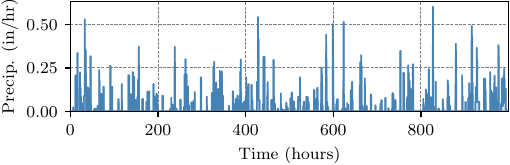}
    \vspace{-15pt}
    \caption{NSRP rainfall used for the closed-loop simulations.}
    \label{fig:rain_plot}
    \vspace{-5pt}
\end{figure}
\begin{figure}[!t]
    \vspace{4pt}
    \centering
    \includegraphics[width=1\columnwidth]{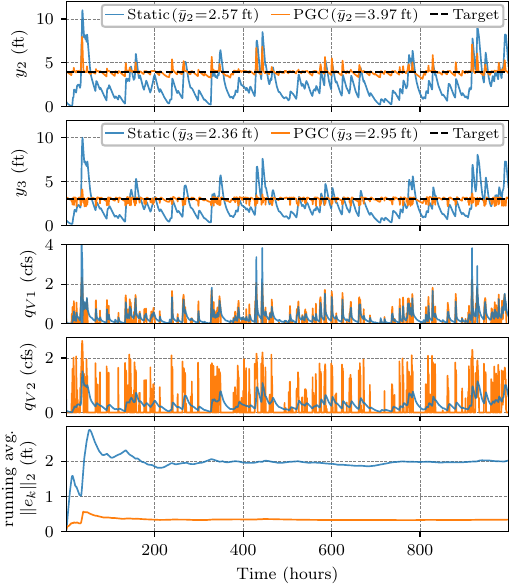}
    \vspace{-15pt}
    \caption{Closed-loop SWMM response under the optimal static feedback (blue) and PGC (orange): basin depths \(y_2\) and \(y_3\) against their targets (dashed), valve flows \(q_{V1}\) and \(q_{V2}\), and the running average of \(\lVert e_k\rVert_2\), where \(e_k=[y_{2, k} - y_2^\star,\, y_{3, k} - y_3^\star]^{T}\).}
    \label{fig:swmm_closed_loop_response}
    \vspace{-5pt}
\end{figure}
Using the NSRP rainfall data shown in Figure~\ref{fig:rain_plot}, we simulated the closed-loop response of the high-fidelity SWMM model with the static and PGC controllers, where the tracking reference was set to the vector of desired basin depths $y_{2}^\star = 4$ ft and $y_{3}^\star = 3$ ft for the East and West Basins, respectively. The closed-loop responses of the SWMM model with the static and PGC control laws as well as the tracking error are shown in Figures~\ref{fig:swmm_closed_loop_response}, where \(y_2\) and \(y_3\) correspond to the East and West Basin depths, respectively. Moreover, \(q_{V1}\) and \(q_{V2}\) correspond to the flows through the East-West and West-outfall valves, respectively. The PGC control law resulted in an 83.6\% reduction in the running-average tracking-error norm over the optimal static feedback (0.33 ft vs. 2.02 ft). 
    \section{Conclusion} \label{sec:conclusions}
Modern water infrastructure systems face increased demands, and many water utilities are considering feedback control to address them. A critical challenge in the field is the mitigation of flood and drought risk, which can be framed as a reference tracking problem. One challenge that arises in control design for smart watershed systems is the presence of power directionality constraints that result from passive actuation technologies like controllable valves. This work developed a performance-guaranteed control framework for reference tracking in systems that exhibit PDCs. The resulting closed-loop system has provable performance guarantees relative to a prescribed linear feedback control law. Future work in this area includes extending the PGC framework to a model predictive control formulation in which the control algorithm can exploit a forecast of future disturbances to achieve even better performance.
    \bibliographystyle{IEEEtran}
    \bibliography{refs.bib}

@article{cembrano_optimal_2004,
  author  = {Cembrano, G. and Quevedo, J. and Salamero, M. and Puig, V. and Figueras, J. and Mart{\'\i}, J.},
  title   = {{Optimal Control Of Urban Drainage Systems. A Case Study}},
  journal = {Control Engineering Practice},
  year    = {2004},
  volume  = {12},
  number  = {1},
  pages   = {1--9},
  doi     = {10.1016/S0967-0661(02)00280-0}
}

@article{wong_real-time_2018,
  author  = {Wong, B. P. and Kerkez, B.},
  title   = {{Real-Time Control Of Urban Headwater Catchments Through Linear Feedback: Performance, Analysis, And Site Selection}},
  journal = {Water Resources Research},
  year    = {2018},
  volume  = {54},
  number  = {10},
  pages   = {7309--7330},
  doi     = {10.1029/2018WR022657}
}

@article{pyke_assessment_2011,
  author  = {Pyke, Christopher and others},
  title   = {{Assessment Of Low Impact Development For Managing Stormwater With Changing Precipitation Due To Climate Change}},
  journal = {Landscape and Urban Planning},
  year    = {2011},
  volume  = {103},
  number  = {2},
  pages   = {166--173},
  doi     = {10.1016/j.landurbplan.2011.07.006}
}

@article{garcia_modeling_2015,
  author  = {Garc{\'\i}a, L. and Barreiro-G{\'o}mez, J. and Escobar, E. and T{\'e}llez, D. and Quijano, N. and Ocampo-Mart{\'\i}nez, C.},
  title   = {{Modeling And Real-Time Control Of Urban Drainage Systems: A Review}},
  journal = {Advances in Water Resources},
  year    = {2015},
  volume  = {85},
  pages   = {120--132},
  doi     = {10.1016/j.advwatres.2015.08.007}
}

@inproceedings{evans_real-time_2011,
  author    = {Evans, R. and others},
  title     = {{Real-Time Optimal Control Of River Basin Networks}},
  booktitle = {{IFAC} World Congress},
  series    = {{IFAC} Proceedings Volumes},
  year      = {2011},
  volume    = {44},
  number    = {1},
  pages     = {11459--11464},
  doi       = {10.3182/20110828-6-IT-1002.00451}
}

@article{maiolo_use_2020,
  author  = {Maiolo, Mario and others},
  title   = {{On The Use Of A Real-Time Control Approach For Urban Stormwater Management}},
  journal = {Water},
  year    = {2020},
  volume  = {12},
  number  = {10},
  doi     = {10.3390/w12102842}
}

@article{sun_real-time_2020,
  author  = {Sun, Congcong and Puig, Vicen{\c{c}} and Cembrano, Gabriela},
  title   = {{Real-Time Control Of Urban Water Cycle Under Cyber-Physical Systems Framework}},
  journal = {Water},
  year    = {2020},
  volume  = {12},
  number  = {2},
  doi     = {10.3390/w12020406}
}

@article{vanrolleghem_modelling_2005,
  author  = {Vanrolleghem, Peter A. and Benedetti, Lorenzo and Meirlaen, Jurgen},
  title   = {{Modelling And Real-Time Control Of The Integrated Urban Wastewater System}},
  journal = {Environ. Model. Softw.},
  year    = {2005},
  volume  = {20},
  number  = {4},
  pages   = {427--442},
  doi     = {10.1016/j.envsoft.2004.02.004}
}

@article{hrovat_survey_1997,
  author  = {Hrovat, D.},
  title   = {{Survey Of Advanced Suspension Developments And Related Optimal Control Applications}},
  journal = {Automatica},
  year    = {1997},
  volume  = {33},
  number  = {10},
  pages   = {1781--1817},
  doi     = {10.1016/S0005-1098(97)00101-5}
}

@article{karnopp_active_1995,
  author  = {Karnopp, D.},
  title   = {{Active And Semi-Active Vibration Isolation}},
  journal = {Journal of Mechanical Design},
  year    = {1995},
  volume  = {117},
  number  = {B},
  pages   = {177--185},
  doi     = {10.1115/1.2836452}
}

@article{housner_structural_1997,
  author  = {Housner, G. W. and others},
  title   = {{Structural Control: Past, Present, And Future}},
  journal = {Journal of Engineering Mechanics},
  year    = {1997},
  volume  = {123},
  number  = {9},
  pages   = {897--971},
  doi     = {10.1061/(ASCE)0733-9399(1997)123:9(897)}
}

@article{casciati_active_2012,
  author  = {Casciati, Fabio and Rodellar, Jos{\'e} and Yildirim, Umut},
  title   = {{Active And Semi-Active Control Of Structures -- Theory And Applications: A Review Of Recent Advances}},
  journal = {Journal of Intelligent Material Systems and Structures},
  year    = {2012},
  volume  = {23},
  number  = {11},
  pages   = {1181--1195},
  doi     = {10.1177/1045389X12445029}
}

@article{cha_comparative_2013,
  author  = {Cha, Young-Jin and others},
  title   = {{Comparative Studies Of Semiactive Control Strategies For MR Dampers: Pure Simulation And Real-Time Hybrid Tests}},
  journal = {Journal of Struct. Eng.},
  year    = {2013},
  volume  = {139},
  number  = {7},
  pages   = {1237--1248},
  doi     = {10.1061/(ASCE)ST.1943-541X.0000639}
}

@inproceedings{ligeikis_discrete-time_2022,
  author    = {Ligeikis, Connor and Scruggs, Jeff},
  title     = {{Discrete-Time, Performance-Guaranteed Control Of Vibratory Systems With Power Directionality Constraints}},
  booktitle = {American Control Conference},
  year      = {2022},
  pages     = {4248--4255},
  doi       = {10.23919/ACC53348.2022.9867818}
}

@inproceedings{ligeikis_lqg-inspired_2022,
  author    = {Ligeikis, Connor and Scruggs, Jeff},
  title     = {{An LQG-Inspired Framework For Self-Powered Feedback Control}},
  booktitle = {Conference on Decision and Control},
  year      = {2022},
  pages     = {4519--4526},
  doi       = {10.1109/CDC51059.2022.9993121}
}

@article{northrop_stochastic_2024,
  author  = {Northrop, Paul J.},
  title   = {{Stochastic Models Of Rainfall}},
  journal = {Annual Review of Statistics and Its Application},
  year    = {2024},
  volume  = {11},
  number  = {1},
  pages   = {51--74},
  doi     = {10.1146/annurev-statistics-040622-023838}
}

@article{russo_rainfall_2006,
  author  = {Russo, Fabio and Lombardo, Federico and Napolitano, Francesco and Gorgucci, Eugenio},
  title   = {{Rainfall Stochastic Modeling For Runoff Forecasting}},
  journal = {Physics and Chemistry of the Earth, Parts A/B/C},
  year    = {2006},
  volume  = {31},
  number  = {18},
  pages   = {1252--1261},
  doi     = {10.1016/j.pce.2006.06.002}
}

@article{cowpertwait_further_1991,
  author  = {Cowpertwait, Paul S. P.},
  title   = {{Further Developments Of The Neyman--Scott Clustered Point Process For Modeling Rainfall}},
  journal = {Water Resources Research},
  year    = {1991},
  volume  = {27},
  number  = {7},
  pages   = {1431--1438},
  doi     = {10.1029/91WR00479}
}

@electronic{um_mcity,
  author = {{Mcity}},
  title  = {{Mcity Test Facility}},
  note   = {Accessed: Apr. 1, 2026},
  url    = {https://mcity.umich.edu/what-we-do/mcity-test-facility/}
}

@article{McDonnell2020,
  author    = {McDonnell, Bryant E. and Ratliff, Katherine and Tryby, Michael E. and Wu, Jennifer Jia Xin and Mullapudi, Abhiram},
  title     = {{PySWMM: The Python Interface To Stormwater Management Model (SWMM)}},
  journal   = {Journal of Open Source Softw.},
  publisher = {The Open Journal},
  year      = {2020},
  volume    = {5},
  number    = {52},
  doi       = {10.21105/joss.02292}
}

@article{vanoverschee1994n4sid,
  author  = {Van Overschee, Peter and De Moor, Bart},
  title   = {{N4SID: Subspace Algorithms For The Identification Of Combined Deterministic-Stochastic Systems}},
  journal = {Automatica},
  year    = {1994},
  volume  = {30},
  number  = {1},
  pages   = {75--93},
  doi     = {10.1016/0005-1098(94)90230-5}
}

@electronic{vangemert_nfoursid_2025,
  author       = {van Gemert, Steven},
  title        = {{NFourSID: Implementation Of N4SID, Kalman Filtering And State-Space Models}},
  howpublished = {Python Package Index ({PyPI})},
  note         = {Version 1.0.2},
  url          = {https://pypi.org/project/nfoursid/1.0.2/},
  year         = {2025}
}

@article{diamond2016cvxpy,
  author  = {Diamond, Steven and Boyd, Stephen},
  title   = {{CVXPY: A Python-Embedded Modeling Language For Convex Optimization}},
  journal = {Journal of Machine Learning Research},
  year    = {2016},
  volume  = {17},
  number  = {83},
  pages   = {1--5}
}

@inproceedings{scruggs_nonlinear_2006,
  author={Scruggs, Jeffrey T. and Taflanidis, Alexandros A. and Iwan, Wilfred D.},
  title={{Nonlinear Stochastic Controllers for Semiactive and Regenerative Structural Control Systems, with Guaranteed Quadratic Performance Margins}},
  booktitle={Dynamic systems and controls, Symposium on Design and Analysis of Advanced Structures, tribology}, publisher={American Society of Mechanical Engineers},
  year={2006},
  pages={487--495},
  DOI={10.1115/ESDA2006-95625},
  address={New York}
}
\end{document}